\documentclass[10pt,draftcls,onecolumn]{IEEEtran}    

\usepackage{amsfonts,amssymb,amstext,verbatim,amsmath,graphicx,color,enumerate}
\usepackage{amsfonts,amssymb,amstext,verbatim,amsmath,graphicx,color}
\usepackage{url}

\usepackage{algorithm}
\usepackage[noend]{algpseudocode}
\makeatletter
\newcommand{\StatexIndent}[1][3]{%
  \setlength\@tempdima{\algorithmicindent}%
  \Statex\hskip\dimexpr#1\@tempdima\relax}
\makeatother

\newtheorem{theorem}{\textbf{Theorem}}[section]
\newtheorem{algo}{\textbf{Dynamics}}

\newtheorem{propo}[theorem]{\textbf{Proposition}}
\newtheorem{definition}[theorem]{\textbf{Definition}}
\newtheorem{assumption}[theorem]{\textbf{Assumption}}

\newcommand{\rea}{\mathbb{R}}




\long\def\symbolfootnote[#1]#2{\begingroup%
\def\thefootnote{\fnsymbol{footnote}}\footnote[#1]{#2}\endgroup}




\newcommand{\map}[3]{#1: #2 \rightarrow #3}

\newcommand{\eps}{\varepsilon}

\newcommand{\real}{\mathbb{R}}

\newcommand{\realnonnegative}{\mathbb{R}_{\geq0}}

\newcommand{\integer}{\mathbb{Z}}
\newcommand{\integernonnegative}{\mathbb{Z}_{\ge 0}}

\usepackage{enumerate}

\makeatletter
\def\@opargbegintheorem#1#2#3{\trivlist
   \item[]{\bfseries #1\ #2\ (#3)} \itshape}
\makeatother


\IEEEoverridecommandlockouts                              

\begin{document}

\title{Stability of Open Multi-Agent Systems and Applications to Dynamic Consensus}%
\author{Mauro Franceschelli$^\ddagger$ and Paolo Frasca$^\dagger$
\thanks{\scriptsize $^\ddagger$M.~Franceschelli is with the Dept.\ of Electrical and Electronic Engineering, University of Cagliari, Italy, email: \ {\tt\scriptsize mauro.franceschelli@diee.unica.it}.}
\thanks{\scriptsize $^\dagger$P.~Frasca is with Univ.\ Grenoble Alpes, CNRS, Inria, Grenoble INP, GIPSA-lab, F-38000 Grenoble, France. email: \ {\tt\scriptsize paolo.frasca@gipsa-lab.fr}.}
\thanks{
\scriptsize $^\ddagger$ This work was supported by the Italian Ministry of Research and Education (MIUR) under call SIR 2014 with the grant ``CoNetDomeSys", code  RBSI14OF6H, and by Regione Autonoma della Sardegna with the 2016/2017 ``Visiting Professor Program". The research of P. Frasca was also partly supported by ANR (French National Science Fundation) via project HANDY, number ANR-18-CE40-0010.}
}

\maketitle

\begin{abstract}
In this technical note we consider a class of multi-agent network systems that we refer to as Open Multi-Agent Systems (OMAS): in these multi-agent systems, an indefinite number of agents may join or leave the network at any time. 
Focusing on discrete-time evolutions of scalar agents, we provide a novel theoretical framework to study the dynamical properties of OMAS: specifically, we propose a suitable notion of stability and derive sufficient conditions to ensure stability in this sense. These sufficient conditions regard the arrival/departure of an agent as a disturbance: consistently, they require the effect of arrivals/departures to be bounded (in a precise sense) and the OMAS to be contractive in the absence of arrivals/departures.  
In order to provide an example of application for this theory, we re-formulate the well-known Proportional Dynamic Consensus for Open Multi-Agent Systems and we characterize the stability properties of the resulting Open Proportional Dynamic Consensus algorithm.
\end{abstract}


\section{Introduction}\label{intro}


A multi-agent system is a dynamical model for the behavior of a possibly large group of agents, e.g., robots, devices, sensors, oscillators etc., whose pattern of interactions due to sensing, communication or physical coupling is modeled by a graph that represents the network structure of the system. Most literature on multi-agent systems considers networks of fixed size, i.e., number of agents, and then considers several kinds of scenarios such as time-varying network topologies. %
In this paper we explicitly consider a more radical scenario of {\em open} multi-agent system where {\em the set of agents is time-varying}, i.e., agents may join and leave the network at any time.

This situation is common in numerous applications, of which we mention a few. In the internet of things (IoT) and smart power grids, smart devices can join and leave the grid~\cite{FRANECC2016,Franceschelli20182541}; in social network (either online or offline~\cite{SG-PJ:12}) individuals can join or leave; in multi-vehicle coordination, the composition of a robotic team or platoon of vehicles can evolve with time.

Despite their ubiquity, open multi-agent systems have received surprisingly little attention in either control or in contiguous fields. Notions of ``open" systems can be found in the computer science literature \cite{Huynh2006,Pinyol2013} when referring to software agents and the problem of evaluating reputation in open environments, but not as dynamical systems. At times, dynamically evolving populations have also been considered in game theory, at least to show the robustness of price-of-anarchy results~\cite{shah10dynamics,TL-VS-ET:16}.
Despite the abundance of works in multi-agent systems from the systems and control community, openness is rarely explicitly included in a rigorous analysis, but rather explored by simulations as in~\cite{Zhu2010322}. In multi-robot systems, where adaptivity to addition/removal of robots is crucial, some architectures accommodate for dynamics teams but offer no performance guarantees~\cite{RP-PF-JWD-RC-FB:16}. Indeed, openness implies some conceptual difficulties in adapting control-theoretic notions such as state or stability. For this reason, some authors have recently proposed to circumvent some of the mathematical hurdles by embedding the time-varying agent set in a time-invariant finite superset~\cite{8364624}.
In a different perspective, others have aimed to describe the open multi-agent system through significant statistical properties:
insightful and encouraging results have been presented in \cite{7852357,hendrix2017}, where the authors study the problem of average-consensus by gossiping, and in~\cite{AbdHen_CDC17a}, where the authors study a max-consensus problem.

The contribution of this paper is twofold, as it covers both theoretical results and concrete examples. As a theoretical contribution, we introduce an abstract framework for discrete-time open multi-agent systems: this framework is based upon proper definitions of state evolutions, equilibria, and stability, and allows to establish useful stability criteria for a class of ``contractive'' open multi-agent systems. Instrumental to this development is extending the notion of (Euclidean) distance to apply to vectors that belong to different spaces and therefore have different length. This goal is achieved by our definition of {\em open distance} function. 
%
%

In order to provide a concrete example, we extend a distributed control protocol, namely Proportional Dynamic Consensus, to the open scenario, thereby defining the Open Proportional Dynamic Consensus algorithm. Its stability properties can be studied by our analysis tools. 

In the {\em dynamic consensus problem}, each of the nodes receives an input signal and is tasked to track the average of all inputs over the network. 
Our interest in dynamic consensus originates from its fundamental role in distributed control in general and specifically in the domain of smart grids. In the latter application, the object of the distributed estimation can be the time-varying average power consumption by the network. Thus, by considering the planned power consumption of each device as an external reference signal for each agent, a dynamic consensus algorithm can be used to estimate the time-varying average value of this potentially large set of reference signals. Since devices login and logout from the network without notice, the set of reference signals is, in general, time-varying.


The dynamic consensus problem has received significant attention, as demonstrated by the forthcoming tutorial~\cite{dyncons-csm18}. Since the early work in \cite{spanos2005dynamic}, a fundamental idea to render consensus protocols ``dynamic'' has been adding the derivative of each agents' own reference signal to a consensus filter that would thus track the time-varying average of the references. Several algorithms that exploit this mechanism have been proposed~\cite{Zhu2010322,Nosrati20122262,Kia2015112}: their main advantages are convergence speed and accuracy (which can be perfect for constant reference signals), while their main drawback is their lack of robustness with respect to errors in their initialization and, consequently, with respect to changes in the network composition. If the number of agents changes, these algorithms accumulate estimation errors that can severely deteriorate the estimation performance. 
Some algorithms, for instance those in \cite{Montijano20143131} and \cite{Freeman2006,Freeman2010,Freeman2015}, have instead shown superior robustness properties that can be useful to allow for the addition or removal of agents, 
even though their analysis has been so far limited to networks of fixed size. We take note that the main drawback of proportional dynamic consensus, as illustrated in~\cite{Freeman2006}, consists in a trade-off between steady-state error for constant reference signals, tracking error and convergence rate. In the recent work \cite{FRANCDC2016,FRANAUT2019}, the authors propose and characterize the so-called multi-stage dynamic consensus algorithm, which consists in a cascade of proportional consensus filters and has been proved to guarantee small steady-state and tracking error for a given convergence rate, thanks to exchanging a larger quantity of local information between the agents. Furthermore, the strategy proposed in \cite{FRANCDC2016,FRANAUT2019} has been shown to be implementable with asynchronous and randomized (gossip-based) local interactions.

In the recent conference publication~\cite{MF-PF:18}, we gave a preliminary account of our work that contains variations of some of the results presented here. Among the main differences, in the present paper stability is defined according to a normalized notion of distance, where the distance between two states is normalized by the square root of the number of agents; instead, this normalization was not used in~\cite{MF-PF:18}. Furthermore, the notion of contractive OMAS was not introduced in~\cite{MF-PF:18}. 

\paragraph*{Structure of the paper} In Section~\ref{problem statement} we introduce the framework of open multi-agent systems (OMAS) from a theoretical perspective and present an adaptation of two known distributed control protocols to this new framework. In Section~\ref{stability} we provide theoretical tools for the stability analysis of discrete-time OMAS and apply the results to two examples of distributed control protocols for OMAS. In Section~\ref{numericalexamples} we corroborate our results with numerical examples and finally, in Section~\ref{conclusions} we discuss some concluding remarks.

\section{Open dynamical systems} \label{problem statement}
For all time $k\in\integernonnegative$, let $G_k=\left(V_k,E_k\right)$ be a time-varying directed graph with time-varying set of agents (also called nodes) $V_k\subset\integer$ and time-varying set of edges $E_k\subseteq \left(V_k \times V_k\right)$. Set $V_k$ contains the labels corresponding to the agents that are active at time $k$.
The cardinality of set $V_k$, that is, the number of agents that belong to the network at time $k$, is denoted as $n_k=|V_k|$. To avoid trivialities, we assume that $n_k>0$ for all $k$. Two agents $v$ and $w$ are said to be neighbors at time $k$ if they share an edge at time $k$, i.e., $(v,w)\in E_k$. Let $N^v_k$ be the set of neighbors of node $v$ at time $k$, i.e., $N^v_k=\left\{w\in V_k: (v,w)\in E_k\right\}$. Let $\Delta^v_k=|N^v_k|$ denote the number of neighbors of agent $v$ at time $k$.

For each time $k$ and each agent $v\in V_k$, we associate a scalar {\em state} variable $x^v_k\in\real$ and an {\em input} variable $u^v_k\in \real$. Note that these variables are defined only at time instants such that $v\in V_k$. More generally, in this paper we shall call {\em open sequence} any sequence $\{y_k\}_k$ where $y_k$ is indexed over $V_k$.

With these ingredients we can define laws that describe how the open sequence $\left\{x_k\right\}_{k}$ evolves. However, we will not be able in general to write $x_{k+1}$ as a function solely of $x_k$: therefore, the evolution of $x_k$ does not constitute a ``closed'' dynamical system.
Instead, we shall take as given 
%
%
the open sequences $\left\{V_k\right\}_{k\in\integernonnegative}$ and $\left\{E_k\right\}_{k\in\integernonnegative}$, as well as the open sequence of inputs $\left\{u_k\right\}_{k\in\integernonnegative}$. Provided the consistency conditions that $E_k\subseteq \left(V_k \times V_k\right)$ and $u_k\in\real^{V_k}$ for all $k$, we shall define the evolution of the open sequence $\{x_k\}$ by a law
\begin{equation}\label{eq:open-general} 
x_{k+1}=f(x_k,V_k,V_{k+1},E_k, E_{k+1}, u_k, u_{k+1}).
\end{equation}
%
%
Such update rule should distinguish three kinds of nodes $v$, respectively belonging to the sets:

\begin{itemize}
\item $R_k=V_k\cap V_{k+1}$, i.e.,\ {\em remaining} nodes that belong to both $V_k$ and $V_{k+1}$;

\item $D_k=V_k\setminus V_{k+1}$, i.e.,\ {\em departing} nodes that belong to $V_k$ but not to $V_{k+1}$;

\item $A_k=V_{k+1}\setminus V_k$, i.e.,\ {\em arriving} nodes that belong to $V_{k+1}$ but not to $V_k$.
\end{itemize}
\smallskip
Since $x_k$ must take values in $\real^{V_k}$ for all $k$, the components corresponding to $D_k$ are simply left out from $x_{k+1}$. Instead, components in $A_k$ need to be ``initialized'' according to some rule. Finally, for all $v\in R_k$ there shall be a causal evolution law in the form
\begin{equation}\label{eq:inherent-dynamics}x^v_{k+1}=f^v(x_k,V_k,E_k,u_k).\end{equation}

For concreteness, we now describe an example of such map, which we call Open Proportional Dynamic Consensus (OPDC).
\begin{algo}[Open Proportional Dynamic Consensus (OPCD)] %
Let $\varepsilon>0$ and $\alpha\in (0,0.5)$.
At each time $k\in\integernonnegative$, each agent $v\in V_{k}$ measures a reference signal $u^v_k$ and updates its state $x^v_k$ as follows:
\begin{subequations}\label{opdc}
if $v\in R_{k}$, then 
\begin{align}
x^v_{k+1}=  x^v_{k} -\alpha (x^v_{k}-u^v_{k}) -\varepsilon \sum_{w\in N^v_{k}} (x^v_{k}-x^w_{k}); \label{part1}
\end{align}
if $v\in A_{k}$, then
\begin{align}
   x^v_{k+1}=
    \displaystyle u^v_{k+1}\label{part2}. 
   \end{align}
\end{subequations} 

\end{algo}

We observe that if $V_{k+1}= V_{k}$, i.e., the set of agents does not change, the OPDC reduces to what is called Proportional Dynamic Consensus. Namely, it reduces to the update~\eqref{part1}, which can be written in matrix form as
\begin{align}
x_{k+1}&=x_{k} -\alpha (x_{k}-u_{k})-\varepsilon L_{k} x_{k} \nonumber\\
&=\big((1-\alpha)I-\varepsilon L_{k}\big) x_k +\alpha u_{k} \nonumber\\
&=P_{k}x_k+\alpha u_{k} \label{globaldyn}
\end{align}
where matrix $P_{k}=(1-\alpha)I-\varepsilon L_{k}$.

We also observe that \eqref{part2} necessarily involves a component defined at time $k+1$, since $u_k$ is undefined when $v\in A_{k}$.

We will study OPDC under the assumption that the join/leave process guarantees some good behavior of sequence of graphs.
\begin{assumption}[Open Proportional Dynamic Consensus]\label{ass:about-opdc}
Consider the open dynamics~\eqref{opdc} and assume that for every $k\in\integernonnegative$,
\begin{enumerate}
\item Graph $G_k$ is undirected (that is, $(u,v)\in E_k$ if and only if $(v,u)\in E_k$);
\item $\max_{v \in V_k}\Delta_k^v \leq \frac{1}{2\varepsilon}$ for all $v\in V_k$;
\item the number of agents in the OMAS can not decrease too rapidly:
$\beta^2\leq \frac{|V_{k+1}|}{|V_{k}|}$ for some positive scalar $\beta$. 
\end{enumerate} 
\end{assumption}
We remark that Assumption~\ref{ass:about-opdc} can easily be satisfied in a distributed way and in particular does not require graph $G_k$ to be connected for any $k\in\integernonnegative$.
Let $\bar{\lambda}^{2}\ge0$ be a uniform lower bound to the algebraic connectivity $\lambda^{2,L}_{k}$ of the Laplacian matrix $L_k$ corresponding to graph $G_k$, that is, let $\bar{\lambda}^{2}\leq \lambda^{2,L}_{k}$. Such a constant always exists (since we allow it to be zero): when it is positive, it implies that all graphs are connected and that connectivity is uniformly good.


\section{Stability of open multi-agent systems}\label{stability}

%

In our general study of the stability of OMAS, we lie down our instruments in three steps: (i) we define suitable (sequences of) points that play the role of equilibria; (ii) we extend the notion of distance to operate on vectors of unequal length; (iii) we define a suitable notion of stability and give a sufficient condition for it.

\subsection{Points of interest and their stability}
We now define the concept of {\em trajectory of points of interest} which will be useful in the considered scenario of open multi-agent system.
\begin{definition}[\bf Trajectory of Points of Interest (TPI)]\label{def3.01}
Consider an open multi-agent system~\eqref{eq:open-general}. Assume that for every $k\geq 0$,
the equation
$$y=f(y,V_{k},V_{k},E_{k}, E_{k},u_{k}, u_{k})$$
has a unique solution and denote that solution as $x^{e}_{k}$. Then, the sequence $\left\{x^e_{k}\right\}_{k\in\integernonnegative}$ is called {\em trajectory of points of interest} of the open multi-agent system.  
\end{definition}

Observe that $x^{e}_{k_0}\in\real^{V_{k_0}}$ represents the hypothetical equilibrium of the dynamics followed by $x_{k}$ if the three sequences $V_{k}$, $E_{k}$ and $u_{k}$ would be kept constant for all $k\ge k_0$. Consequently, $x^e_{k_0}$ is determined only by information at time $k_0$: the time-variance of $V_{k}$, $E_{k}$ and $u_{k}$ does not imply any ambiguity in the definition of the sequence $x^e_{k}$.

The next definition~introduces a notion akin to a weak form of Lyapunov stability for open multi-agent systems.
\begin{definition}[\textbf{Open Stability of a Trajectory of Points of Interest}]\label{def2}
Let $x_{k}$ be the evolution of an open multi-agent system. A trajectory of points of interest $x^e(t)$ is said to be {\em open stable} if there exists a {\em stability radius} $R\ge 0$ with the following property: for every $\varepsilon>R$, there exists $\delta>0$ such that 
if $\frac1{\sqrt{n_0}}||x_0-x^e_0|| < \delta$, then $\frac1{\sqrt{n_k}}||x_k-x^e_k|| < \varepsilon$ for every $k\geq 0$. 
\end{definition}

Note that in this definition distances are normalized by the number of agents. Such normalization, which is trivial when the set of agents is invariant, is useful here because it allows for a fair comparison of distances evaluated in spaces of different dimension.


\subsection{Contractive OMAS}
Next we define a particular class of open-multi-agent systems of our interest.
\begin{definition}[Contractive OMAS] \label{contractiveOMAS}
Consider the open multi-agent system in \eqref{eq:open-general}. The OMAS is said to be \emph{contractive} if there exists 
$\gamma\in \left[0,1\right)$ such that for all $x,y \in \rea^{V_k}$ and  for all  $k\geq 0$
\begin{equation}\label{contract}
||f(x,V_k,V_k,E_k,u_k)-f(y,V_k,V_k,E_k,u_k)|| \leq \gamma ||x-y||. 
\end{equation}
\end{definition}

%
%


By Banach Fixed Point Theorem, every contractive OMAS has a TPI.
%
%
%
As an example, consider system~\eqref{opdc}. Under Assumption~\ref{ass:about-opdc}, the OPDC is a contractive OMAS and the solution $x^e_{k}$ is unique for every $k$ and can be computed as
\begin{align}\label{xeu} 
x^e_{k}=(I-P_{k})^{-1}\alpha u_{k}=\Big(I+\frac{\eps}{\alpha} L_{k}\Big)^{-1}u_{k}.%
\end{align}


\subsection{Open distance function}
Next, we define a so-called ``open" distance function which is used to evaluate the distance between two points with labeled components that belong to Euclidean spaces of different dimensions. In the particular case in which the two points have components with the same labels, i.e., the same agents, the open distance function reduces to the Euclidean distance.


\begin{definition}[Open distance function]\label{def3}
Let $V_1$ and $V_2$ be two finite sets of node indices. Let
$\map{d}{\real^{V_1}\times \real^{V_2}}{\realnonnegative}$ be defined as
\begin{equation}\label{opendistancefunction}
d(x,y)=  \displaystyle  \sqrt{\sum_{v\in V_1\cap V_2} (x^v-y^v)^2 + \sum_{v\in V_1 \setminus V_2} ({x^v})^2 + \sum_{v\in V_2 \setminus V_1} ({y^v})^2 }
\end{equation}
for any $x\in\real^{V_1}$ and $y\in\real^{V_2}$.
\end{definition}


%


Variants of Definition \ref{def3} can be given based on norms different from the $2$-norm.
The open distance~\eqref{opendistancefunction} satisfies several natural properties, which we summarize in the next statement.


\begin{propo}[\textbf{Properties of open distance functions}]\label{def3.1}
Function $d(x,y)$ in~\eqref{opendistancefunction} is such that for any vectors $x$, $y$, and $z$ of possibly different dimensions:
\begin{enumerate}
\item $d(x,y)\geq 0$;
\item $d(x,y)=d(y,x)$;
\item If $x=y$, then $d(x,y)=0$;
\item $d(x,z)\leq d(x,y)+d(y,z)$
\end{enumerate}
\end{propo}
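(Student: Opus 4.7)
The plan is to verify properties (1)--(3) directly from the definition and to reduce property (4) to the classical triangle inequality for the Euclidean norm by embedding all vectors in a single ambient space.

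For (1), each term in~\eqref{opendistancefunction} is a square, so the sum inside the radical is nonnegative and $d(x,y)$ is well-defined and nonnegative. For (2), the three sums in~\eqref{opendistancefunction} are invariant under swapping $(x,V_1)$ with $(y,V_2)$: the first sum is symmetric in $x^v$ and $y^v$, while the second and third sums exchange their roles. For (3), if $x=y$ (which implicitly requires $V_1=V_2$), then $V_1\setminus V_2=V_2\setminus V_1=\emptyset$ and the only remaining sum has $x^v-y^v=0$ for every $v$.

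The main step is the triangle inequality (4). Given $x\in\real^{V_1}$, $y\in\real^{V_2}$ and $z\in\real^{V_3}$, let $V=V_1\cup V_2\cup V_3$ and define extensions $\tilde{x},\tilde{y},\tilde{z}\in\real^{V}$ by setting the missing components to zero, e.g.\ $\tilde{x}^v=x^v$ for $v\in V_1$ and $\tilde{x}^v=0$ for $v\in V\setminus V_1$. The key observation is that
\begin{equation*}
d(x,y)=\nnorm{\tilde{x}-\tilde{y}}_2,
\end{equation*}
and analogously for the other pairs. Indeed, splitting the sum $\sum_{v\in V}(\tilde{x}^v-\tilde{y}^v)^2$ over the regions $V_1\cap V_2$, $V_1\setminus V_2$, $V_2\setminus V_1$, and $V\setminus(V_1\cup V_2)$, one recovers exactly the three sums in~\eqref{opendistancefunction}, since components outside $V_1\cup V_2$ contribute zero to both extensions.

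Once this embedding identity is established, property (4) follows immediately from the standard Euclidean triangle inequality:
\begin{equation*}
d(x,z)=\nnorm{\tilde{x}-\tilde{z}}_2\leq \nnorm{\tilde{x}-\tilde{y}}_2+\nnorm{\tilde{y}-\tilde{z}}_2 = d(x,y)+d(y,z).
\end{equation*}
The only mild subtlety is to be careful that the ambient space used for the embedding is the \emph{union} of all three index sets, so that the identity $d(\cdot,\cdot)=\nnorm{\tilde{\cdot}-\tilde{\cdot}}_2$ holds simultaneously for every pair appearing in the inequality; no other step is expected to present any difficulty.
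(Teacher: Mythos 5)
Your proposal is correct and follows essentially the same route as the paper: embed all three vectors in the common ambient space given by the union of the index sets, pad with zeros, observe that the open distance coincides with the Euclidean distance of the padded vectors, and invoke the ordinary triangle inequality. Your write-up is in fact a bit more explicit than the paper's, since you verify the embedding identity by splitting the sum over the four regions, a detail the paper leaves to the reader.
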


\begin{proof}
Properties 1), 2), and 3) being evident, we now prove property 4), i.e., the triangle inequality.

Consider sets $V_x$, $V_y$, $V_z$ and 
define the union set $R= V_x \bigcup V_y \bigcup V_z$ and new vectors $\bar{x},\bar{y}, \bar{z} \in \rea^{R}$ where their generic component is defined as $\bar{x}^v=x^v$ if $i\in V_x$ and  $\bar{x}^v=0$ otherwise.
Since $\bar{x}$, $\bar{y}$, $\bar{z}$ belong to the same space $\rea^{R}$, it follows that the triangle inequality
$$d(\bar{x},\bar{y})\leq d(\bar{x},\bar{z})+d(\bar{z},\bar{y})$$ holds true since the open distance reduces to the ordinary Euclidean one.
The result follows because one can readily verify that $d(\bar{x},\bar{y})=d(x,y)$. 
\end{proof}

Note that the converse of the third implication (identity of indiscernibles) does not hold. Indeed, consider $x\in \real^{\{1,2\}}$ to be $x=[1, 0]$ and $y\in \real^{\{1\}}$ to be $[1]$. Then, $d(x,y)=0$ despite the two vectors being different.


Having this open distance available, we can naturally use it on open sequences to give the following definition.
\begin{definition}[Open sequence of bounded variation]\label{def0}
A sequence $\left\{y_{k}\right\}$ of points $y_k\in \real^{V_k}$ is said to have bounded variation if there exists a constant $B\ge 0$ such that $d(y_{k+1},y_{k}) \leq \sqrt{|V_{k+1}|}B$ for all $k\in\integernonnegative$.
\end{definition}
Note that this definition in fact normalizes the open distance by the number of components of the vectors.

An important special case is the trajectory of points of interest of an open multi-agent system.
\begin{definition}[Trajectory of points of interest (TPI) of bounded variation]\label{def1}
A TPI $\left\{x^e_{k}\right\}$ is said to have bounded variation if there exists $B$ such that $d(x^e_{k+1},x^e_{k}) \leq \sqrt{|V_{k+1}|}B$ for all $k\in\integernonnegative$.
\end{definition}


\subsection{Stability: sufficient conditions}


%
%
%



In order to provide a sufficient condition to ensure stability in the above sense, we will need to combine assumptions on both the associated TPI and on its join process.
The latter assumption will take the following form.

\begin{definition}[Bounded join process]\label{joinprocess}
A join process is said to be \emph{bounded} if each agent joins the OMAS with a state value such that 
$$\sqrt{\sum_{v\in A_{k}} \left(x^v_{k+1}-x^{e,v}_{k+1}\right)^2}\leq \sqrt{|V_{k+1}|}H \qquad \forall k\in\integernonnegative$$ for some $H\ge0$. 
\end{definition}


We are now ready to state our main stability result.

\begin{theorem}[Stability of Open Multi-Agent Systems] \label{theorem:stabilityOpenSystems3}
Consider an open multi-agent system as in \eqref{eq:open-general} with state trajectory $\left\{x_{k}\right\}$. Assume that
\begin{enumerate}
\item  the OMAS is contractive with parameter $\gamma\in [0,1)$;
\item  its TPI $\left\{x^e_{k}\right\}$ has bounded variation with constant $B$;
\item  the join process is bounded with constant $H$;
\item  $|V_{k+1}|\geq \beta^2 |V_{k}|$ for all $k$ with $\beta> \gamma$. 
\end{enumerate}
Then, the trajectory of points of interests is {\em open stable} (Definition \ref{def2}) with stability radius 
$$R=\frac{B+H}{1-\frac{\gamma}{\beta}}$$
\end{theorem}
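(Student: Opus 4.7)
The plan is to reduce the vector inequality \eqref{opendistancefunction}-type reasoning to a scalar linear recurrence on the normalized error $e_k := \frac{1}{\sqrt{n_k}}\|x_k - x^e_k\|$ (an ordinary Euclidean norm since both vectors lie in $\real^{V_k}$). Concretely, I will show that
$$e_{k+1} \leq \frac{\gamma}{\beta}\, e_k + (B+H),$$
which, since $\mu := \gamma/\beta \in [0,1)$, iterates to $e_k \leq \mu^k e_0 + \frac{B+H}{1-\mu} = \mu^k e_0 + R$. Open stability then follows immediately: for any $\varepsilon > R$ take $\delta = \varepsilon - R > 0$, so that $e_0 < \delta$ combined with $\mu^k \leq 1$ yields $e_k < \varepsilon$ for every $k$.

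To produce the recurrence, the key construction is the \emph{frozen-data} update
$$z_k := f(x_k, V_k, V_k, E_k, E_k, u_k, u_k) \in \real^{V_k}.$$
By Definition of contractive OMAS and the fact that $x^e_k$ is the fixed point of $f(\cdot, V_k, V_k, E_k, u_k)$, one has $\|z_k - x^e_k\| \leq \gamma\|x_k - x^e_k\|$. Causality, via \eqref{eq:inherent-dynamics}, forces $x^v_{k+1} = z^v_k$ for every $v \in R_k$. Splitting $V_{k+1} = R_k \sqcup A_k$ therefore gives
$$\|x_{k+1}-x^e_{k+1}\|^2 = \sum_{v\in R_k}(z^v_k-x^{e,v}_{k+1})^2 + \sum_{v\in A_k}(x^v_{k+1}-x^{e,v}_{k+1})^2.$$
The second sum is bounded by $n_{k+1}H^2$ from the bounded-join assumption. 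For the first sum, I insert $x^{e,v}_k$ inside the restriction to $R_k$ and apply the triangle inequality in $\real^{R_k}$, exploiting the two elementary facts that $\|w|_{R_k}\| \leq \|w\|$ for any $w\in\real^{V_k}$, and that $\sum_{v\in R_k}(x^{e,v}_k - x^{e,v}_{k+1})^2 \leq d(x^e_k, x^e_{k+1})^2 \leq n_{k+1}B^2$ by the bounded-variation hypothesis. This yields
$$\sqrt{\textstyle\sum_{v\in R_k}(z^v_k-x^{e,v}_{k+1})^2} \;\leq\; \gamma\|x_k-x^e_k\| + \sqrt{n_{k+1}}\,B.$$
Combining the two pieces and using $\sqrt{a^2+b^2}\leq a+b$ for $a,b\geq 0$ gives
$$\|x_{k+1}-x^e_{k+1}\| \leq \gamma\|x_k-x^e_k\| + \sqrt{n_{k+1}}\,(B+H).$$

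Dividing by $\sqrt{n_{k+1}}$ and invoking $n_{k+1}\geq \beta^2 n_k$ to bound $\gamma/\sqrt{n_{k+1}/n_k}$ by $\gamma/\beta$ yields the desired recurrence $e_{k+1} \leq (\gamma/\beta)\,e_k + (B+H)$, completing the argument. The main obstacle is mostly bookkeeping: correctly handling that the $R_k$-restriction of $z_k$ coincides with $x_{k+1}|_{R_k}$, and that the bounded-variation and bounded-join bounds (both normalized by $\sqrt{n_{k+1}}$) combine cleanly with the contractivity bound (which is not normalized). Once these are lined up and the $\sqrt{a^2+b^2}\leq a+b$ step is applied, the geometric-series argument on a scalar linear recurrence is routine.
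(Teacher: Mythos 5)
Your proof is correct and follows essentially the same route as the paper's: split the error at time $k+1$ over remaining and arriving nodes, bound the three pieces by contractivity, bounded variation of the TPI, and the bounded join process, then normalize by $\sqrt{|V_{k+1}|}$ and use $|V_{k+1}|\geq\beta^2|V_k|$ to obtain the recurrence $e_{k+1}\leq(\gamma/\beta)e_k+(B+H)$. Your write-up is in fact slightly more careful than the paper's in two spots -- the frozen-data map $z_k$ with the restriction inequality $\|w|_{R_k}\|\leq\|w\|$ makes the application of contractivity to the sub-vector over $R_k$ rigorous, and you spell out the geometric-series iteration that the paper leaves implicit -- but these are refinements of the same argument, not a different one.
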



\begin{proof}
At each iteration $k$ the agents first update their state, then some new agents may join and some may leave. By considering the open distance function, it holds
\begin{align}
d(x_{k+1},x^e_{k+1})=& \displaystyle \sqrt{\sum_{v\in V_{k+1} \cap V_{k}} (x^v_{k+1}-x^{e,v}_{k+1})^2} \nonumber\\
                       &\displaystyle \overline{+\sum_{v\in V_{k+1} \backslash V_{k}}  (x^v_{k+1}-x^{e,v}_{k+1})^2} \nonumber \\
                   \leq& \displaystyle \sqrt{\sum_{v\in V_{k+1} \cap V_{k}} (x^v_{k+1}-x^{e,v}_{k+1})^2} \nonumber\\
                       &+\displaystyle \sqrt{\sum_{v\in V_{k+1} \backslash V_{k}}  (x^v_{k+1}-x^{e,v}_{k+1})^2}\nonumber\\
                   \leq& \displaystyle \sqrt{\sum_{v\in V_{k+1} \cap V_{k}} (x^v_{k+1}-x^{e,v}_{k})^2} \nonumber\\
                       &+  \displaystyle\sqrt{ \sum_{v\in V_{k+1} \cap V_{k}} (x^{e,v}_{k+1}-x^{e,v}_{k})^2} \nonumber\\
                       &+ \displaystyle \sqrt{\sum_{v\in V_{k+1} \backslash V_{k}}  (x^v_{k+1}-x^{e,v}_{k+1})^2}. \label{eq:after-triangle}
\end{align}
%
Since the OMAS is contractive, we observe that
\begin{align} \label{eq:after-contraction}
\nonumber \sqrt{\sum_{v\in V_{k+1} \cap V_{k}} (x^v_{k+1}-x^{e,v}_{k})^2}\leq &\gamma \sqrt{\sum_{v\in V_{k+1} \cap V_{k}} (x^v_{k}-x^{e,v}_{k})^2} \\ \leq &\gamma d(x_k,x^e_k).
\end{align}
%
%
Now, note that the trajectory of points of interest is of bounded variation, implying 
\begin{equation} 
\displaystyle \sqrt{\sum_{v\in V_{k+1} \cap V_{k}} (x^{e,v}_{k+1}-x^{e,v}_{k})^2} \leq d(x^e_{k+1},x^e_{k})
\label{conditionB}
\leq \sqrt{|V_{k+1}|} B,
\end{equation}
and that the join process is bounded as per Definition \ref{joinprocess}, implying 
\begin{equation}\label{conditionH} 
\sqrt{\sum_{v\in V_{k+1} \backslash V_{k}}  (x^v_{k+1}-x^{e,v}_{k+1})}\leq \sqrt{|V_{k+1}|}H.
\end{equation}
Thus, by upper bounding the righthand side of \eqref{eq:after-triangle} by \eqref{eq:after-contraction}-\eqref{conditionB}-\eqref{conditionH}, we can write
\begin{align}
d(x_{k+1},x^e_{k+1}) \leq & \ \gamma d(x_{k},x^e_{k})  +\sqrt{|V_{k+1}|}B + \sqrt{|V_{k+1}|} H. \label{eq200} 
\end{align}
Let us now divide both sides of \eqref{eq200} by the square root of the cardinality of $|V_{k+1}|$
\begin{equation}\label{eq201} 
\frac{d(x_{k+1},x^e_{k+1})}{\sqrt{|V_{k+1}|}} \leq  \gamma \frac{d(x_{k},x^e_{k})}{\sqrt{|V_{k+1}|}} +B +H. 
\end{equation}
By assumption, $|V_{k+1}|\ge \beta^2 |V_{k}|$ where $\beta > \gamma$, thus we can write
\begin{equation}\label{eq202} 
\frac{d(x_{k+1},x^e_{k+1})}{\sqrt{|V_{k+1}|}} \leq  \frac{\gamma}{\beta} \frac{d(x_{k},x^e_{k})}{\sqrt{|V_{k}|}} +B +H. 
\end{equation}
%
%
This inequality implies that theTPI is open stable with stability radius 
%
$R=\displaystyle \frac{B+H}{1-\frac{\gamma}{\beta}}. $
%
%
\end{proof}

\section{Application: Open Proportional Dynamic Consensus}
We now characterize  the convergence properties of the Open Proportional Dynamic Consensus protocol.


\begin{theorem}[Stability of Open Proportional Dynamic Consensus] \label{OPCD-theorem}
Consider the Open Proportional Dynamic Consensus algorithm (OPDC) under Assumption \ref{ass:about-opdc} and assume that $\beta>1-\alpha$. Let  $\bar{u}_{k}=\frac{\mathbf{1}^Tu_{k}}{n}\mathbf{1}$, $\hat{u}_{k}=u_{k}-\bar{u}_{k}$. 
If the sequence of reference signals $\left\{u_k\right\}$ satisfies 
\begin{equation} \label{inputdisagreement}
\|\hat{u}_{k}\|_{\infty}\leq \Pi, \quad \Pi\geq 0
\end{equation}
and 
\begin{equation}\label{inputboundedvariation}
d(\bar{u}_{k+1},\bar{u}_k)\leq \sqrt{|V_{k+1}|}U,
\end{equation}
%
%
%
%
 then the OPDC is open stable with stability radius 
$$R= \frac{\left(1+
\frac{2}{1+\frac{\varepsilon}{\alpha} \bar{\lambda}^{2}} + \frac{1}{\beta} \frac{1}{1+\frac{\varepsilon}{\alpha} \bar{\lambda}^{2}}\right) \Pi+U}
{1-\frac{1-\alpha}{\beta}}$$
%
%
%
\end{theorem}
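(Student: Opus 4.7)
The plan is to verify the four hypotheses of Theorem~\ref{theorem:stabilityOpenSystems3} for the OPDC and then read off the stability radius. The contraction constant is identified as $\gamma = 1-\alpha$: when $V_{k+1}=V_k$, the update reduces to $x_{k+1}=P_k x_k + \alpha u_k$ with $P_k=(1-\alpha)I-\varepsilon L_k$, so contractivity is equivalent to $\|P_k\|\le 1-\alpha$. Since $L_k$ is symmetric positive semidefinite with eigenvalues in $[0,2\max_v\Delta_k^v]$, Assumption~\ref{ass:about-opdc}(2) guarantees $\varepsilon L_k\preceq I$, whence the eigenvalues of $P_k$ lie in $[-\alpha,1-\alpha]$, and the condition $\alpha<1/2$ yields $\|P_k\|=1-\alpha$. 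Combined with Assumption~\ref{ass:about-opdc}(3) and the hypothesis $\beta>1-\alpha=\gamma$, this verifies hypotheses (1) and (4) of Theorem~\ref{theorem:stabilityOpenSystems3}.

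The main technical step is to quantify the bounded-variation constant $B$ for the TPI and the bound $H$ on the join process. The key observation is the decomposition
\begin{equation*}
x^e_k=\Bigl(I+\tfrac{\varepsilon}{\alpha}L_k\Bigr)^{-1}u_k=\bar u_k + z_k,\qquad z_k:=\Bigl(I+\tfrac{\varepsilon}{\alpha}L_k\Bigr)^{-1}\hat u_k,
\end{equation*}
where we used $L_k\mathbf{1}=0$ so that $\bar u_k$ lies in the kernel. Since $\hat u_k\perp\mathbf{1}$, the operator $(I+\tfrac{\varepsilon}{\alpha}L_k)^{-1}$ acts on it with norm at most $(1+\tfrac{\varepsilon}{\alpha}\bar\lambda^2)^{-1}$, and from $\|\hat u_k\|_\infty\le\Pi$ we obtain $\|z_k\|\le\sqrt{|V_k|}\,\Pi/(1+\tfrac{\varepsilon}{\alpha}\bar\lambda^2)$.

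For the TPI variation, I would apply the triangle inequality of the open distance (Proposition~\ref{def3.1}) through the intermediate points $\bar u_{k+1}$ and $\bar u_k$:
\begin{equation*}
d(x^e_{k+1},x^e_k)\le d(x^e_{k+1},\bar u_{k+1})+d(\bar u_{k+1},\bar u_k)+d(\bar u_k,x^e_k).
\end{equation*}
The middle term is bounded by $\sqrt{|V_{k+1}|}\,U$ by hypothesis \eqref{inputboundedvariation}; the first two terms reduce to ordinary Euclidean norms $\|z_{k+1}\|$ and $\|z_k\|$ since the two points share the same vertex set, and the earlier estimate together with $|V_k|\le|V_{k+1}|/\beta^2$ gives $B=U+\Pi(1+\beta^{-1})/(1+\tfrac{\varepsilon}{\alpha}\bar\lambda^2)$. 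For the join process, if $v\in A_k$ then $x^v_{k+1}=u^v_{k+1}$, so $x^v_{k+1}-x^{e,v}_{k+1}=\hat u^v_{k+1}-z^v_{k+1}$; the $\ell_2$ triangle inequality over $A_k$ and the bounds $\|\hat u_{k+1}\|\le\sqrt{|V_{k+1}|}\,\Pi$ and $\|z_{k+1}\|\le\sqrt{|V_{k+1}|}\,\Pi/(1+\tfrac{\varepsilon}{\alpha}\bar\lambda^2)$ produce $H=\Pi+\Pi/(1+\tfrac{\varepsilon}{\alpha}\bar\lambda^2)$.

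Summing $B+H$ recovers exactly the numerator $\bigl(1+\tfrac{2}{1+\varepsilon\bar\lambda^2/\alpha}+\tfrac{1}{\beta(1+\varepsilon\bar\lambda^2/\alpha)}\bigr)\Pi+U$, while the denominator $1-\gamma/\beta=1-(1-\alpha)/\beta$ comes directly from Theorem~\ref{theorem:stabilityOpenSystems3}. The main obstacle I anticipate is the careful juggling of the normalization factors $\sqrt{|V_{k+1}|}$ versus $\sqrt{|V_k|}$, which forces the use of Assumption~\ref{ass:about-opdc}(3) already in bounding $\|z_k\|$ (producing the $1/\beta$ factor in $B$), and making sure the inner-product structure of $\hat u$ relative to $\mathbf{1}$ is correctly exploited so that the algebraic connectivity $\bar\lambda^2$ rather than the smallest eigenvalue (which is zero) governs the contraction of $(I+\tfrac{\varepsilon}{\alpha}L_k)^{-1}$.
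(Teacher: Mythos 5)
Your proposal is correct and follows essentially the same route as the paper: verify contractivity with $\gamma=1-\alpha$, bound the TPI variation via the triangle inequality through $\bar u_{k+1}$ and $\bar u_k$ using the spectral bound $\|(I+\tfrac{\varepsilon}{\alpha}L_k)^{-1}\hat u_k\|\le(1+\tfrac{\varepsilon}{\alpha}\bar\lambda^{2})^{-1}\|\hat u_k\|$ on $\mathbf{1}^\perp$, bound the join process using $x^v_{k+1}=u^v_{k+1}$ for arriving agents, and conclude with Theorem~\ref{theorem:stabilityOpenSystems3}. The only difference is that you spell out the eigenvalue computation showing $\|P_k\|=1-\alpha$, which the paper merely asserts; your constants $B$, $H$ and the resulting radius match the paper's exactly.
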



\begin{proof}
The proof is divided into four steps which lead to the application of Theorem~\ref{theorem:stabilityOpenSystems3}.

\textbf{Step 1}: 
As we have already observed right before \eqref{xeu}, system~\eqref{opdc} under Assumption~\ref{ass:about-opdc} is a contractive OMAS with $\gamma=1-\alpha$.

\textbf{Step 2:} If the sequence of reference signals $u_k$ \eqref{inputboundedvariation} satisfies \eqref{inputdisagreement} and \eqref{inputboundedvariation}, then the TPI is of bounded variation with constant $$B=\frac{1}{1+\frac{\varepsilon}{\alpha} \bar{\lambda}^{2}}\left(1+\frac{1}{\beta}\right) \Pi+U.$$

\smallskip

We start the proof of step 2 by exploiting the triangle inequality property of the open distance function
\begin{equation}\label{eq:use-triangle} 
d(x^e_{k+1},x^e_{k})\leq d(x^e_{k+1},\bar{u}_{k+1})+ d(x^e_{k},\bar{u}_{k}) +d(\bar{u}_{k+1},\bar{u}_{k}).
\end{equation}
The points of interest 
are 
\begin{align}
    x^{e}_k = (I-P_k)^{-1} \alpha u_k= (\alpha I + \varepsilon L_k)^{-1}\alpha (\bar{u}_k + \hat{u}_k).
\end{align}
Since $(\alpha I + \varepsilon L_k)^{-1}\bar{u}_k=\bar{u}_k$ for any $L_k$ we can write 

\begin{align}
    x^{e}_k-\bar{u}_k = \alpha (\alpha I + \varepsilon L_k)^{-1}\hat{u}_k.
\end{align} 
Now, since the eigenvector corresponding to the largest eigenvalue of  $(\alpha I + \varepsilon L_k)^{-1}$ is $\mathbf{1}$ and $\mathbf{1}^T\hat{u}_k=0$, it holds  
\begin{align}\label{eq:barra-e-tratto}
   \| x^{e}_k-\bar{u}_k\|_2 = \|\alpha (\alpha I + \varepsilon L_k)^{-1}\hat{u}_k\|_2\leq \frac{1}{1+\frac{\varepsilon \lambda^{2,L}_{k}}{\alpha}}\|\hat{u}_k\|_2,
\end{align} 
where $\left(1+\frac{\varepsilon \lambda^{2,L}_{k}}{\alpha}\right)^{-1}$ is the second largest eigenvalue of matrix $\alpha \left(\alpha I + \varepsilon L_k\right)^{-1}$.

Then, the distance between the point of interests and the reference signal at time $k$ satisfies 
\begin{align}
d(x^e_{k},\bar{u}_{k})& = \|x^e_{k}-\bar{u}_{k}\|_2 \leq \frac{\alpha}{\alpha+\varepsilon \lambda^{2,L}_{k}} \|\hat{u}_{k}\|_2 \nonumber \\
 & \leq \frac{1}{1+\frac{\varepsilon}{\alpha} \lambda^{2,L}_{k}} \sqrt{|V_k|}\|\hat{u}_{k}\|_{\infty}.
\end{align}
By noting that $\frac{|V_{k+1}|}{\beta^2}\geq |V_k|$, $\|\hat{u}_{k}\|_{\infty}\leq \Pi$ and $d(\bar{u}_{k+1},\bar{u}_{k})\leq \sqrt{|V_{k+1}|} U$ for some $U\geq 0$, 
it follows from \eqref{eq:use-triangle} that
\begin{align}
d(x^e_{k+1},x^e_{k}) &\leq  \sqrt{|V_{k+1}|}\left(  \frac{1}{1+\frac{\varepsilon}{\alpha} \bar{\lambda}^{2}}\left(1+\frac{1}{\beta}\right) \Pi+U\right) \nonumber \\
                     & =  \sqrt{|V_{k+1}|}B. 
\end{align}




\textbf{Step 3:} The join process of the OPCD is bounded according to Definition \ref{joinprocess} with  $$H=\left(1+ \frac{1}{1+\frac{\varepsilon \bar{\lambda}^{2}}{\alpha}}\right)\Pi.$$ 

\smallskip

In the OPDC algorithm new agents join with a state value equal to their reference signal. Since from \eqref{eq:barra-e-tratto} at time $k+1$,

$$\|x^e_{k+1}-\bar{u}_{k+1}\|_2 \leq \frac{\alpha}{\alpha+\varepsilon \lambda^{2,L}_{k}} \|\hat{u}_{k+1}\|_2,$$
and $|u^v_{k+1}-\bar{u}_{k+1}|\leq \Pi$ by assumption, by recalling that  $A_{k+1}=V_{k+1}\backslash V_{k}$, it holds 
\begin{align*} 
\sqrt{\sum_{v\in A_{k+1}} \left(x^v_{k+1}-x^{e,v}_{k+1}\right)^2} & \leq \sqrt{|V_{k+1}\backslash V_{k}|} (1+ \frac{\alpha}{\alpha+\varepsilon \lambda^{2,L}_{k+1}})\Pi \nonumber\\
                                                                & \leq \sqrt{|V_{k+1}|} (1+ \frac{1}{1+\frac{\varepsilon \bar{\lambda}^{2}}{\alpha}})\Pi 
\end{align*}
Thus, the join process of the OPCD algorithm is bounded according to Definition \ref{def2} with $$H=\left(1+ \frac{1}{1+\frac{\varepsilon \bar{\lambda}^{2}}{\alpha}}\right)\Pi.$$



\textbf{Step 4}: By Theorem~\ref{theorem:stabilityOpenSystems3}, the TPI of the OPCD algorithm is open stable with stability radius
$R= \frac{B+H}{1-\frac{\gamma}{\beta}}.$ 
\end{proof}

When $\bar \lambda^{2}=0$, that is, the join process does not guarantee a uniform connectivity, then the stability radius in Theorem~\ref{OPCD-theorem} takes the simpler form 
$$R_0=\frac{(3 \beta +1) \Pi+ \beta U}{\alpha+\beta-1}$$ and we observe that in general $R\le R_0$.

\section{Numerical examples}\label{numericalexamples}

In this section we show a numerical example of the OPCD algorithm. Our simulations are performed as follows. We considered as tuning parameters $\varepsilon=0.01$, $\alpha=0.1$. The simulated scenario consists of a network of $200$ agents at the initial time, with initial values choosen uniformly at random in the interval $\left[-5000,5000\right]$. The initial graph is generated as an Erd\H{o}s-R\'enyi graph with edge probability $p=0.05$. At each iteration one agent leaves with probability $0.06$ and one agent joins with probability $0.1$: each arriving agent creates random edges with probability $0.05$ with all other agents.  Input reference signals are constant and, when agents join the network, are chosen uniformly at random in the interval $\left[0,1\right]$. 

After describing our simulation setup, we present one typical realization.
To begin with, in Figure \ref{figureOPCD1} we show the evolution of the number of agents and in Figure \ref{figureOPCD3} we show the evolution of the normalized open distance $\frac{d(x^e_k, \bar{u}\mathbf{1})}{\sqrt{|V_k|}}$, that is, the distance between the current point of interest and the average of the input reference signals given to the agents. The value of this quantity depends on the OPCD parameters, in particular it could be reduced by decreasing the parameter $\alpha$. 

We then proceed to exemplify the stability properties of the OPDC. To this purpose, Figure \ref{figureOPCD2} shows the evolution of the normalized open distance $\frac{d(x_k,x^e_k)}{\sqrt{|V_k|}}$, that is, the distance of the state $x_k$ of the network from the current point of interest $x^e_k$. It can be observed that this normalized distance remains bounded after a transient decrease.

This phenomenon is consistent with the stability analysis given in Theorem \ref{OPCD-theorem}. Even though our analysis makes deterministic assumptions and therefore does not in principle allow drawing conclusions on this randomized evolution, we can {\em a posteriori} verify that the simulated join/leave process has satisfied the assumptions of Theorem~\ref{OPCD-theorem} with minimum algebraic connectivity $\bar{\lambda}^2=0.9037$,  $|V_{k+1}|\geq \beta^2|V_k|$ with $\beta=0.9975$, largest degree equal to $20$, $\Pi=0.5139$, and $U=0.0001785$.
Therefore, the result implies a stability radius equal to $R=17.375$, which appears to be a conservative estimate according to Figure~\ref{figureOPCD2}.  

Moreover, in Figure \ref{figureOPCD4} we show the evolution of the normalized open distance $\frac{d(x_k, \bar{u}\mathbf{1})}{\sqrt{|V_k|}}$, which represents the distance between the network state and the average of the input reference signals. Estimating the latter quantity is the objective of the OPCD protocol. This estimation error can be seen to converge to a bounded value despite the open nature of the multi-agent system.

For a useful comparison, in Figures \ref{figureOPCD5} and \ref{figureOPCD6} we show a simulation with of the PDC algorithm with a {\em fixed set of agents} ($n=200$) and constant reference signals. In Figure \ref{figureOPCD6} it can be seen that the network state converges to its equilibrium point (up to machine precision), in contrast with the finite error in Figure \ref{figureOPCD2}. In Figure \ref{figureOPCD5} it can be seen that the network state converges to a steady-state which has a bounded error with respect to the average reference signals: in comparison with Figure \ref{figureOPCD4}, the Open PDC reaches a similar steady-state error (albeit at slower pace) as its classical PDC counterpart.

%



\begin{figure}[t!]
\centering
\includegraphics[width=0.5\textwidth]{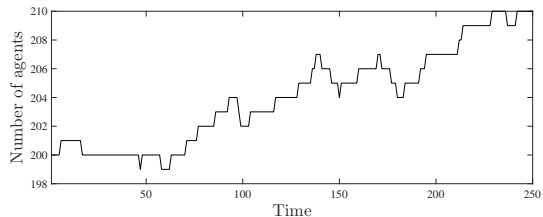}
\caption{Time-varying number of agents.}\label{figureOPCD1}
\end{figure}

\begin{figure}[t!]
\centering
\includegraphics[width=0.5\textwidth]{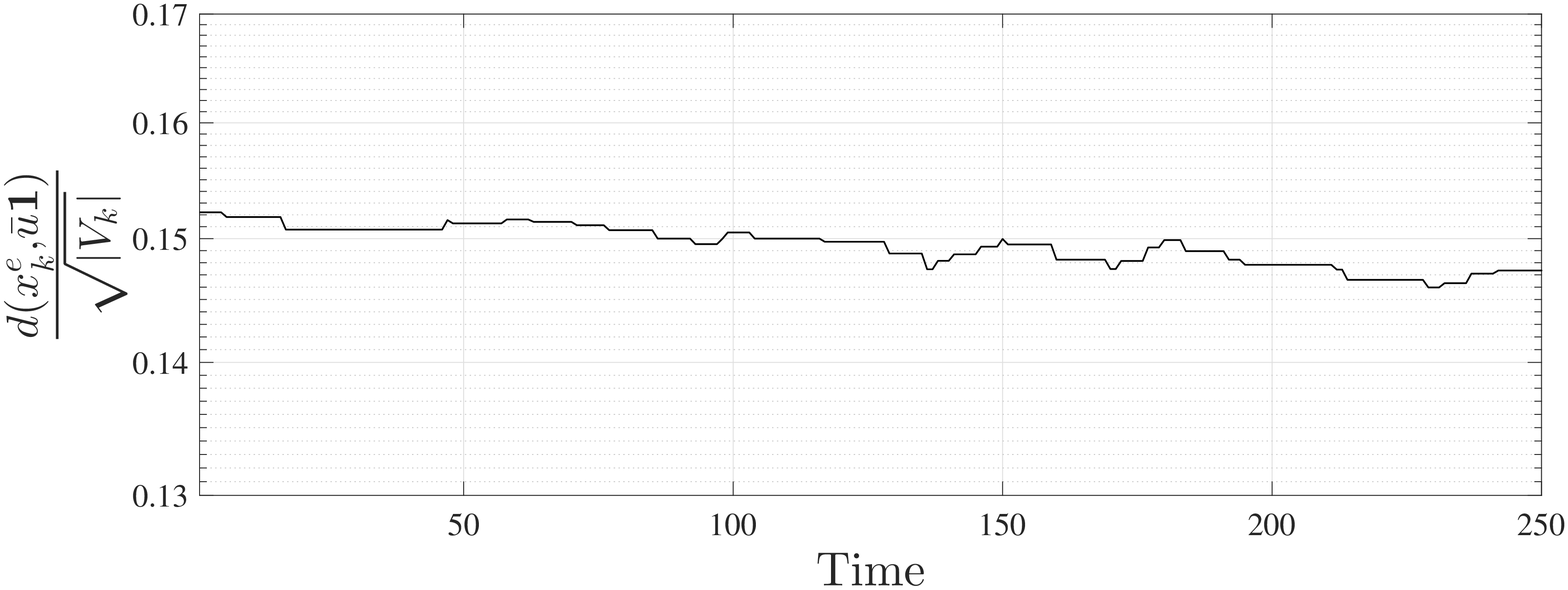}
\caption{Evolution of the normalized open distance between average reference input and point of interest: $\frac{d(x^e_k, \bar{u}\mathbf{1})}{\sqrt{|V_k|}}$.} \label{figureOPCD3}
\end{figure}

\begin{figure}[t!]
\centering
\includegraphics[width=0.5\textwidth]{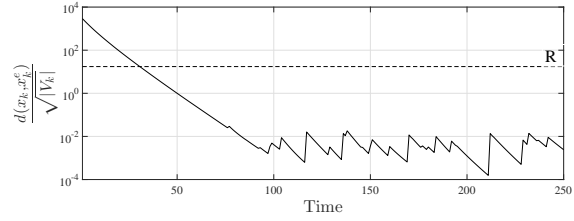}  
\caption{Evolution of the normalized open distance between network state and point of interest: $\frac{d(x_k,x^e_k)}{\sqrt{|V_k|}}$. } \label{figureOPCD2}
\end{figure}

\begin{figure}[t!]
\centering
\includegraphics[width=0.5\textwidth]{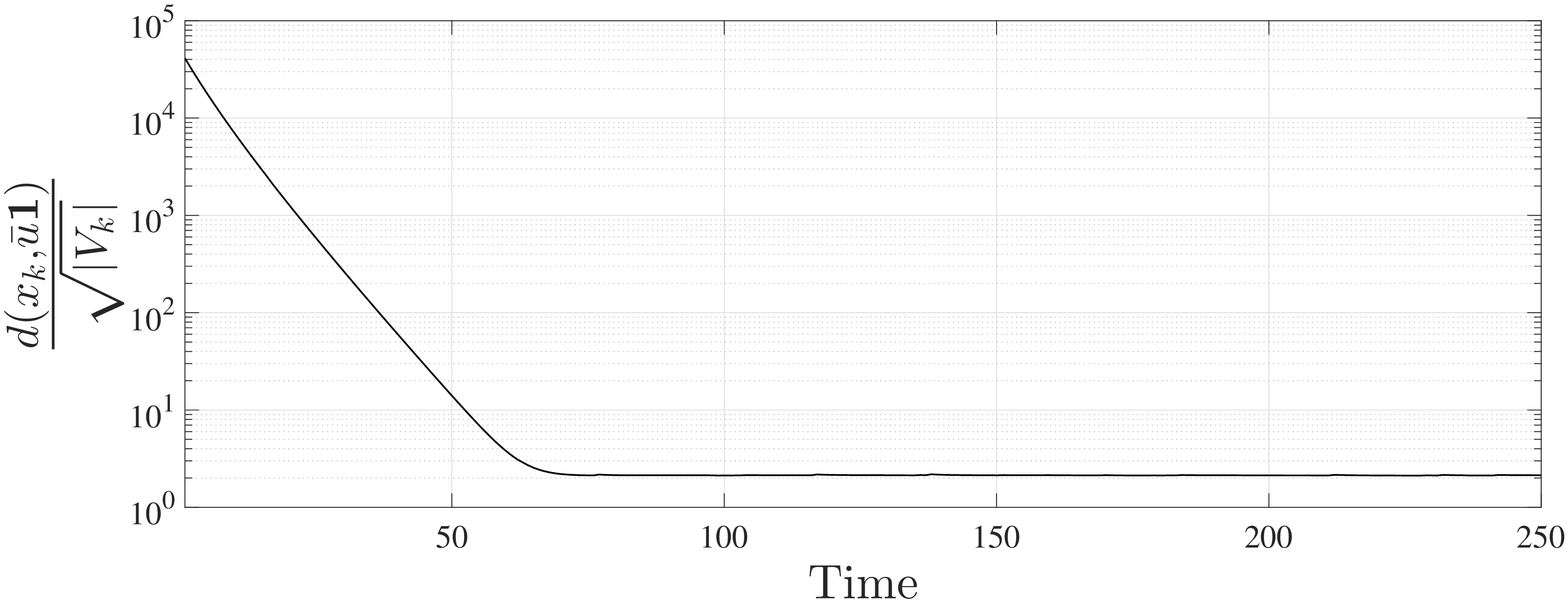} 
\caption{Evolution of the normalized open distance between network state and average reference input: $\frac{d(x_k,\bar{u}\mathbf{1}))}{\sqrt{|V_k|}}$.} \label{figureOPCD4} 
\end{figure}

\begin{figure}[t!]
\centering
\includegraphics[width=0.5\textwidth]{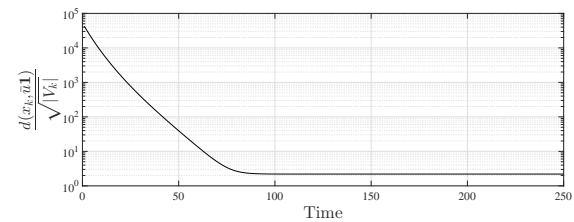} 
\caption{Evolution of the normalized open distance between network state and average reference input: $\frac{d(x_k,\bar{u}\mathbf{1}))}{\sqrt{|V_k|}}$ in the case of time-invariant number of agents $n=200$.} \label{figureOPCD5} 
\end{figure}

\begin{figure}[t!]
\centering
\includegraphics[width=0.5\textwidth]{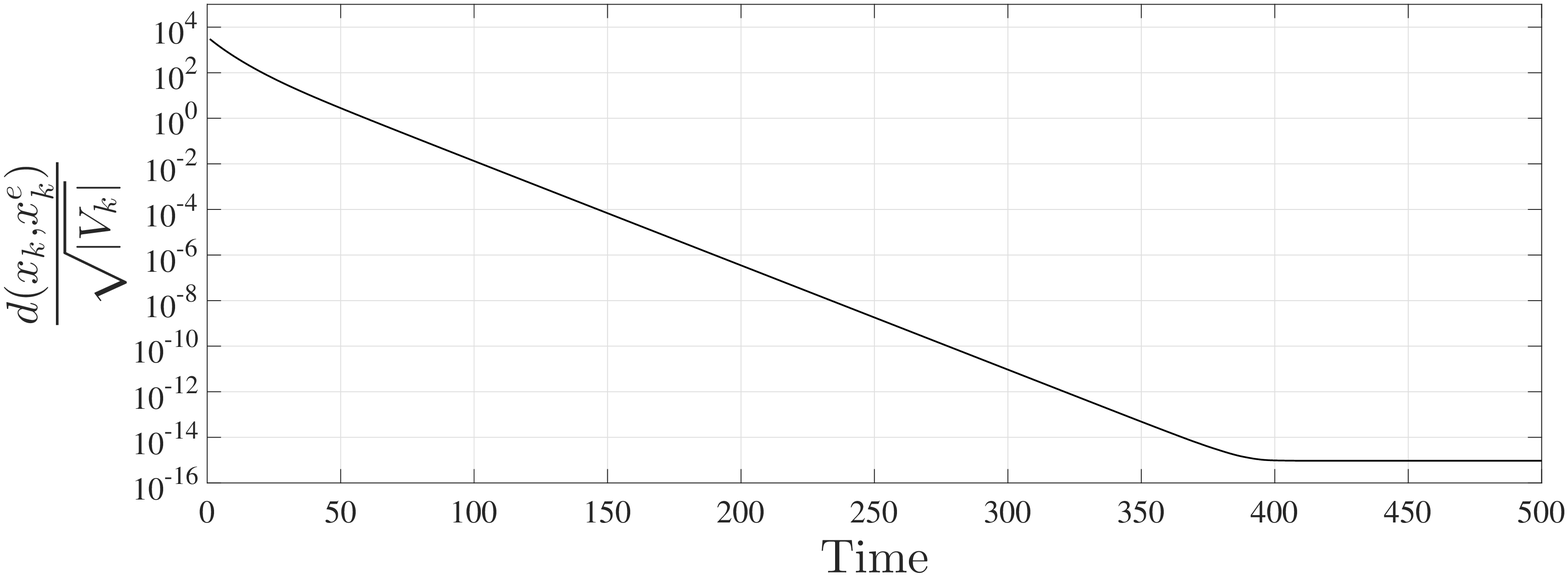} 
\caption{Evolution of the normalized open distance between network state and point of interest: $\frac{d(x_k,x^e_k)}{\sqrt{|V_k|}}$  in the case of time-invariant number of agents $n=200$.} \label{figureOPCD6}
\end{figure}
\section{Conclusions}\label{conclusions}
In this paper we proposed a theoretical framework for stability analysis of discrete-time open multi-agent systems. Standard system-theoretic tolls do not apply directly to OMAS, because of the evolution of their state space. For this reason, we had to propose several new definitions, including suitable definitions of state evolution and of stability. The proposed notion of stability has two features: (1) the distance from the origin is normalized by the number of agents; and (2) the definition disregards what happens within a certain distance from the origin (we refer to this distance as stability radius). In order to study the evolution and the stability of OMAS, it is necessary to compare states that belong to different spaces. To this purpose, we defined the open distance function and used it to establish criteria for stability in the proposed open scenario. In particular, we showed that multi-agent systems whose dynamics (up to arrivals and departures of agents) can be defined by contraction maps are stable according to our definition and their stability radius depends upon the properties of the join and leave mechanisms in the network. Furthermore, we applied our results to an adaptation to OMAS of the proportional dynamic consensus protocol. Future work should pursue two complementary direction: building up a more general and comprehensive theory, while at the same time investigate other classes of open-multi agent systems and propose novel ``open'' distributed coordination algorithms.

\bibliographystyle{IEEEtran}
\bibliography{biblio}

\end{document}